\documentclass[12pt]{article}
\usepackage[utf8]{inputenc}
\usepackage[T1]{fontenc}
\usepackage{mathptmx}
\usepackage{bm}
\usepackage{latexsym}
\usepackage{pstricks}
\usepackage{graphicx}
\usepackage{amssymb}
\usepackage{amsmath}
\usepackage{xcolor}
\usepackage{soul}
\usepackage{cite}
\usepackage{caption}

\usepackage{amsmath}
\usepackage{amsfonts}
\usepackage{amsthm,array} 
\usepackage{array} 
\usepackage{graphicx}
\usepackage{txfonts}
\usepackage{pxfonts}

\usepackage{float}  
\usepackage{longtable}
\usepackage{arydshln}

\usepackage[normalem]{ulem}

\usepackage{mathtools}

\usepackage{array}
\usepackage{subcaption}
\usepackage{booktabs}

\textwidth = 6.4 in
\textheight = 8.7 in
\oddsidemargin = 0.0 in
\evensidemargin = 0.0 in
\topmargin = 0.0 in 
\headheight = 0.0 in
\headsep = 0.0 in
\parskip = 0.2in
\parindent = 0.2in


\def\bR{{\mathbb{R}}} 

\def\bz{{\mathbb{Z}}}

\makeatletter

\@addtoreset{equation}{section}
\makeatletter

\setstcolor{red}

\def\bR{{\mathbb{R}}} 

\def\bz{{\mathbb{Z}}}

\def\vs{\vskip.3cm}
\def\noi{\noindent}
\def\q{\quad}

\usepackage{color}
\newcommand{\rc}{\textcolor{red}}

\newtheorem{thm}{Theorem}[section]
\newtheorem{lem}[thm]{Lemma}
\newtheorem{cor}[thm]{Corollary}
\newtheorem{proposition}[thm]{Proposition}
\newtheorem{Def}[thm]{Definition}
\newtheorem{rmk}[thm]{Remark}

\newtheorem{ex}[thm]{Example}

\newcommand{\END}{\hfill\mbox{\raggedright$\Diamond$}}

\def\vs{\vskip.3cm}

\def\la{\langle}
\def\ra{\rangle}

\def\Ind{\text{\rm Ind\,}}

\def\sign{\text{\,sign\,}}

\def\rind{\text{\,rind\,}}

\def\const{\text{\,constant\,}}

\usepackage{ulem}

\begin{document}

\vfill
\begin{flushright}
\end{flushright}

\begin{center}
\baselineskip=16pt
{\Large\bf Degeneracy Index and Poincar\'e-Hopf Theorem}
\vskip 0.5cm
\vskip 10.mm

{\bf Haibo Ruan${}^{a}$ and Jorge Zanelli${}^{b}$} \\
\vskip 1cm
{${}^a$ Department of Mathematics, University of Hamburg, Bundesstrasse 55, D-20146 Germany\\
${}^b$ Centro de Estudios Cient\'{\i}ficos (CECs), Av. Arturo Prat 514, Valdivia, Chile \\
\vskip 0.2cm
\texttt{\footnotesize{haibo.ruan@math.uni-hamburg.de, z@cecs.cl}}} 
\vspace{6pt}

\begin{abstract}
A degenerate dynamical system is characterized by a state-dependent multiplier of the time derivative of the state in the time evolution equation. It can give rise to Hamiltonian systems whose symplectic structure possesses a non-constant rank throughout the phase space. Around points where the multiplier becomes singular, flow can experience abrupt and irreversible changes. We introduce a topological index for degenerate dynamical systems around these {\it degeneracy points} and show that it refines and extends the usual topological index in accordance with the Poincar\'e-Hopf Theorem. 

\end{abstract} 
\end{center}

\section{Introduction} 

It can occur in some physical systems that they evolve into a state for which the coefficient of the highest derivative in the differential equation that governs the evolution of the system vanishes. When such state is reached the dynamical evolution experiences an abrupt change, the evolution may become unpredictable, some degrees of freedom may cease to exist, and information about the initial state can be lost \cite{Dds}. This may happen, for example, in compressible fluids when a wave front exceeds the speed of sound, as well as in shock-wave solutions of Burgers? equation \cite{Choodnovsky}. In Lagrangian mechanics this corresponds to a globally non-invertible relation between velocities and momenta, which leads to multiple Hamiltonians for a given Lagrangian \cite{HTZ} and has recently been used  to  construct ``time crystals" \cite{Wilczek}. This problem is also a generic condition of gravitation theories in more than four spacetime dimensions \cite{TZ}.

Although in mechanics the problem of degeneracy usually appears as the multivaluedness of the velocity as a function of momentum, in its simplest form degeneracy appears in autonomous first order equations such as those describing a Hamiltonian flow.
\vskip 0.3cm

Recall the usual form of continuous dynamical systems
\begin{equation}
\frac{dx}{dt}=f(x),\q x\in \bR^n, \q t\in \bR.\label{eq:1}
\end{equation}
The {\it degenerate dynamical systems} have a modified form of (cf. \cite{Dds})
\begin{equation}
 A(x)\cdot \frac{dx}{dt}=f(x),\label{eq:2}
\end{equation}
where the matrix $A$ on the left-hand side can become singular on {certain {\it degeneracy set}, defined by}
\begin{equation*}
{D} = \{x\in \bR^n\,:\, \det A(x)=0\}.
\end{equation*}

A characteristic feature of degeneracy is that around degeneracy points, magnitude of the velocity $|\frac{dx}{dt}|$ can be infinitely large. This can be seen from the following $1$D example 
\[
x\cdot \frac{dx}{dt}=1,\q x\in \bR,
\]
which degenerates at $x=0$. This degeneracy point marks the sign-change of $\det A(x)=x$, which results in the velocity $\frac{dx}{dt}$ switching from $+\infty$ to $-\infty$ as $x$ moves from the positive to the negative. In general, for higher dimensional phase space ($n>1$), degeneracy sets ${D}$ typically form co-dimension-1 surfaces in phase space (cf. \cite{Dds}).

Of particular interest in physics is the case of Hamiltonian flow, where $n$ is even, $x$ are the coordinates in the phase space and $A$ is a pre-symplectic form. It can be seen that in these cases the orbits in phase space are restricted to non-overlapping regions of the entire phase space, there are no orbits connecting those different regions and therefore the dynamics splits into distinct regimes separated by co-dimension-$1$ surfaces \cite{Dds}. Preliminary results also indicate that some of these features are shared by the canonically quantized version of degenerate systems: the separation between phase space regions translates into a splitting of the Hilbert space into a direct sum of orthogonal Hilbert spaces constructed on the classically disconnected regions of phase space \cite{Qds}.

Mathematical studies of degenerate dynamical systems can be found in the context of differential-algebraic equations (DAEs). Equations of form (\ref{eq:1}) correspond to linearly implicit differential equations and points of degeneracy are singular points of these equations (cf. \cite{RCM_2000}). Singular points in non-scalar case $n>1$ were first studied by Rabier in \cite{Rabier_1989}, followed by Medved \cite{Med_1991}, Rei\ss ig \cite{Rei_1997} where normal forms were obtained for certain types of singular points (cf. \cite{RB_1998}), who introduced a notion of {\it standard singular points}. Further studies led to normal forms for (\ref{eq:1}) near impasse points that are not necessarily standard. In piecewise smooth or discontinuous dynamical systems, it is generally assumed that the right-hand side $f$ of (\ref{eq:1}), defined on disjoint open domains of the phase space, can be continuously extended to their border (cf. \cite{Filippov_1988,Bernardo_2008,Hogan_2016}). On the other hand, degenerate dynamical systems {in which $\det A(x)$ has a simple zero are equivalent to those of the form} (\ref{eq:1}) whose right-hand side switches from $+\infty$ to $-\infty$ across the border of domain.

In multiple time scale dynamical systems, fast-slow systems are studied using singular perturbations, where dynamics on the slow manifold is described by a DAE which may or may not contain singular points (cf. \cite{Kuehn_2015, PS_2007}).

In this paper, we are interested in studying $2$D degenerate dynamical systems, where $A$  arises as a symplectic form
\[
A=\left(\begin{array}{cc} 0 &f\\ -f&0\end{array}\right),
\]
for a smooth function $f:\bR^2\to \bR$. The symplectic nature of $A$ is not a fundamental requirement for the present discussion, but it is motivated by physical considerations (cf. \cite{Dds}). 

That is, we are interested in studying 
\begin{equation}\label{eq:sym2d_gen}
\left(\begin{array}{cc} 0 & f\\ -f & 0\end{array}\right) 
\left(\begin{array}{c} \dot x_1\\\dot x_2 \end{array}\right)
=\left(\begin{array}{c} E_1\\ E_2 \end{array}\right)=E,
\end{equation}
for a smooth function $E:\bR^2\to \bR^2$. The degeneracy points in this case, are precisely zeros of $f$, which under a regularity assumption, form a co-dimension-1 submanifold in the plane, being either an infinitely extending line or a circle. 

The goal of this paper is to define a topological index {for the flow (\ref{eq:sym2d_gen}) in the presence of} these co-dimension one lines of degeneracy, which can be used to classify degenerate dynamics in the plane and on the two-dimensional sphere $S^2$. {Naturally}, degenerate dynamical systems with an empty degeneracy set become ordinary dynamical systems. As we will see, the introduced index gives an extension of the usual topological index for ordinary flows and provides a parallel of the Poincar\'e-Hopf {theorem} on the sphere for degenerate flows.

\section{Ring Index} \label{sec:ri}   

Let $f:\bR^2\to \bR$ be a smooth map having $0$ as a regular value. Assume that $f^{-1}(0)\ne \emptyset$. Then,  by the Implicit Function Theorem, the pre-image
\[\gamma:=\{(x_1,x_2)\in\bR^2\,:\, f(x_1,x_2)=0\}\]
is a submanifold of co-dimension one. Every connected component of $\gamma$ is either  homeomorphic to an infinitely extending line $\bR$ or to a circle $S^1$.

The reason {for requiring regularity of} $\gamma$ is to avoid singular cases such as self-intersections, which can be thought of as an intermediate transition between two topologically distinct sets. See Figure \ref{F:self} for example, where a figure-eight curve can be perturbed to either a disjoint union of two circles ($S^1\sqcup S^1$) or to one circle ($S^1$).
	\begin{figure}[!htb]
\centerline{
		\includegraphics[width=.45\textwidth]{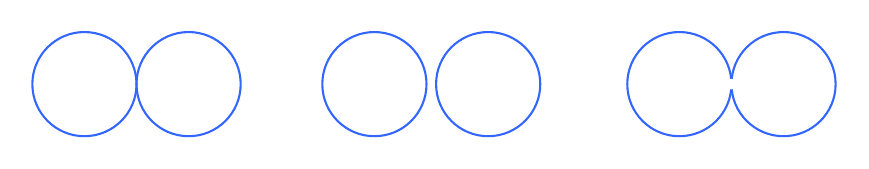}
} 

	\caption{A curve with self-intersection (left) can be either perturbed to a disjoint union of two circles  (middle) or  one circle  (right).}
	\label{F:self}  
\end{figure} 

Thus, no topological index that {remains} constant under small perturbations can be defined directly to such singular cases like the figure-eight. These can be studied in the context of topological bifurcations {and lies beyond} the scope of this paper.

We introduce first an index {for} compact $\gamma$ and then extend it to non-compact $\gamma$ using its compactification on the sphere $S^2$.

Denote by $J=\left(\begin{array}{cc} 0 & -1\\ 1 & 0\end{array}\right)$. Then, (\ref{eq:sym2d_gen}) can be reformulated as
\begin{equation}\label{eq:sym2d_gen_JE}
f
\left(\begin{array}{c} \dot x_1\\\dot x_2 \end{array}\right)
=JE.
\end{equation}

\subsection{Definition}\label{subsec:def_ri}
\begin{Def}\rm A compact connected $1$-manifold $\gamma$ without boundary is called a {\it ring}. If $\gamma=f^{-1}(0)$ for a smooth map $f:\bR^2\to \bR$, then $f$ is chosen (with an appropriate sign) so that $J\nabla f$ coincides with the counter-clock wise orientation on $\gamma$.

\END
\end{Def}

\begin{ex}\label{ex:f_2}\rm
\begin{itemize}
	\item[(a)]	Let $\gamma=\{(x_1,x_2)\in \bR^2: x_1^2+x_2^2=1\}$ and $f=x_1^2+x_2^2-1$. Then, $\gamma=f^{-1}(0)$ and $J\nabla f$ is counter-clock wise on $\gamma$. See Figure \ref{F:f_ex_2} (left). 
	\begin{figure}[!htb]
		\centerline{
			\includegraphics[width=.4\textwidth]{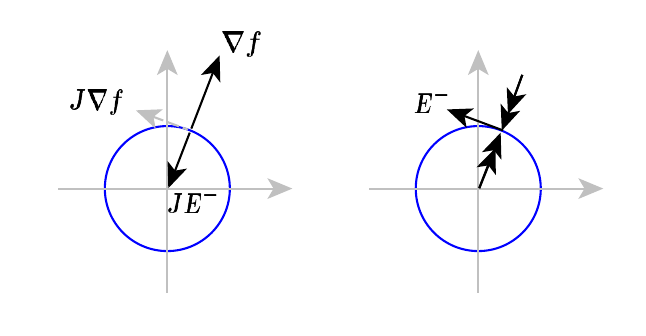}
		} 

		\caption{Left: Positions of $J\nabla f$ and $JE^-$ on $\gamma=\{(x_1,x_2)\in \bR^2: x_1^2+x_2^2=1\}$, where $f=x_1^2+x_2^2-1$ and $E^-=(E_1,E_2)=(-x_2,x_1)$. Right: Dynamics of (\ref{eq:sym2d_gen_JE}) around $\gamma$. }
		\label{F:f_ex_2}  
	\end{figure} 
 
\item [{(a)}] Consider $E^-:=(E_1,E_2)=(-x_2,x_1)$ for $(x_1,x_2)\in \bR^2$. Then, $JE^-=(-x_1, -x_2)$. The dynamics of (\ref{eq:sym2d_gen_JE}) around $\gamma$ can be obtained directly. It is characterized by an in-flow from both sides of $\gamma$ with an infinite velocity $(\dot x_1,\dot x_2)$ along $JE^-$. See Figure \ref{F:f_ex_2} (right). 
\item [(b)] Consider $E^+=(E_1,E_2)=(x_2,-x_1)$ for $(x_1,x_2)\in \bR^2$. Then, $JE^+=(x_1, x_2)$. The dynamics of (\ref{eq:sym2d_gen_JE}) around $\gamma$ can be obtained similarly. It is characterized by an out-flow from both sides of $\gamma$ with an infinite velocity $(\dot x_1,\dot x_2)$ {along $JE^+$}. See Figure \ref{F:f_ex_3} (right).  
 
 	\begin{figure}[!htb]
 	\centerline{
		\includegraphics[width=.4\textwidth]{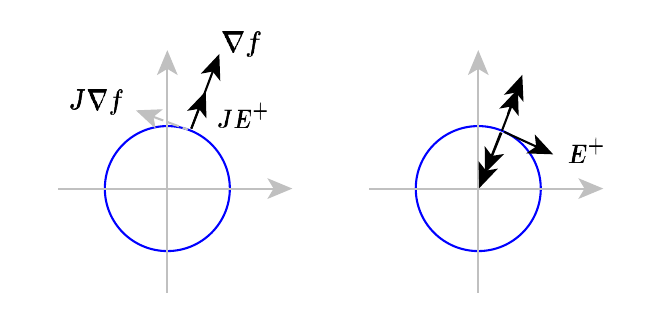}
	} 
 	
 	\caption{Left: Positions of $J\nabla f$ and $JE^+$ on $\gamma=\{(x_1,x_2)\in \bR^2: x_1^2+x_2^2=1\}$, where $f=x_1^2+x_2^2-1$ and $E^+=(E_1,E_2)=(x_2,-x_1)$. Right: Dynamics of (\ref{eq:sym2d_gen_JE}) around $\gamma$.}
 	\label{F:f_ex_3}  
 \end{figure} 

\end{itemize}
\END
\end{ex}	

\begin{rmk}\rm
	Note that the two vector fields $JE^\pm$ in Example \ref{ex:f_2}(a)-(b) are homotopic on $\gamma$ by a rotation of $180^\circ$. Indeed, they both define a map $\gamma\to S^1$ that has winding number $1$. However, since the degenerate dynamics they define through (\ref{eq:sym2d_gen_JE}) differs dramatically, we want to  introduce a topological index that can distinguish the two cases, by putting restrictions on the allowed deformations.
\END
\end{rmk}

The idea is to refrain $JE$ from taking the tangent direction of $\gamma$. For convenience, denote  by
$$m_\gamma(JE)=\big|\{(x_1,x_2)\in \gamma: \la JE(x_1,x_2), \nabla f (x_1,x_2)\ra=0\}\big|,$$
which is the number of points on $\gamma$ where $JE$ becomes tangent to $\gamma$ {and $\la \cdot , \cdot \ra$ is the Euclidean scalar product in $\bR$}.  In case $\gamma$ becomes a point as a limiting case of a degeneracy ring, we set $m_\gamma(JE)=0$. This number is by no means always finite. An extreme example is when $JE$ becomes tangent everywhere on $\gamma$, a case of {\it reducible degeneracy} (cf. \cite{Qds,Dds}). In example 2.2, $\la JE^{{\pm}} , \nabla f \ra=\pm 2$ and therefore $m_\gamma(JE)=0$.

\begin{lem}\label{lem:even} Consider the map $\la JE, \nabla f\ra:\gamma\to \bR$, which is a differentiable map between manifolds of the same dimension. If $0$ is a regular value of $\la JE, \nabla f\ra$, then there are an even number of zeros of $\la JE, \nabla f\ra$ on $\gamma$.
\end{lem}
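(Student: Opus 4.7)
The plan is to exploit two facts: first, that $\gamma$, being a compact connected $1$-manifold without boundary, is diffeomorphic to $S^1$; and second, that since $0$ is a regular value of the smooth map $g:=\langle JE,\nabla f\rangle:\gamma\to\bR$, the preimage $g^{-1}(0)$ is a compact $0$-dimensional submanifold of $\gamma$, hence a \emph{finite} set of points.

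My preferred route is the ``classifying 1-manifolds'' argument. First I would consider the closed sublevel set
\[
M := g^{-1}\bigl([0,\infty)\bigr)\subset \gamma.
\]
Because $0$ is a regular value and $g$ is smooth on the $1$-manifold $\gamma$, a standard application of the Regular Value Theorem (with boundary) shows that $M$ is a compact $1$-manifold whose boundary is exactly $\partial M = g^{-1}(0)$. By the classification of compact $1$-manifolds, every connected component of $M$ is diffeomorphic either to $S^1$ (no boundary) or to a closed interval (two boundary points). Hence $|\partial M|$ is even, and this is precisely $m_\gamma(JE)$.

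An equivalent and perhaps more elementary way to see the same thing is to parametrize $\gamma\cong S^1$ by arclength $t\in\bR/L\bZ$ and view $g$ as an $L$-periodic smooth function on $\bR$. At each zero $t_0$, regularity gives $g'(t_0)\neq 0$, so $g$ strictly changes sign across $t_0$. Walking once around the circle, $g$ must return to its initial value, so the number of sign changes from $+$ to $-$ equals the number from $-$ to $+$, and the total number of zeros is twice either count, in particular even.

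I do not expect a real obstacle here: the essential content is just that a compact $1$-manifold with boundary has an even number of boundary points, together with the observation that regularity of $0$ forces each zero of $g$ to be a genuine sign-crossing. The only small care needed is to invoke the boundary version of the Regular Value Theorem (or, equivalently, to note that near a regular zero $g$ has the local form $g(t)=g'(t_0)(t-t_0)+O((t-t_0)^2)$, so the sign-change is immediate), and to use the fact that compactness of $\gamma$ together with isolation of the zeros makes $m_\gamma(JE)$ finite to begin with.
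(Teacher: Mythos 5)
Your proof is correct, and its main route differs from the paper's. The paper's formal argument goes through mod~$2$ degree theory: since $\bR$ is contractible, the map $g=\la JE,\nabla f\ra:\gamma\cong S^1\to\bR$ is homotopic to a nonzero constant, and homotopy invariance of the number of regular preimages modulo $2$ (Milnor) forces $|g^{-1}(0)|$ to be even; your second, sign-change argument is essentially the paper's informal opening remark that the graph of a periodic function must cross the axis an even number of times, made rigorous via $g'(t_0)\neq 0$ at each zero. Your primary argument, however, applies the boundary version of the Regular Value Theorem directly inside $\gamma$: the set $M=g^{-1}\bigl([0,\infty)\bigr)$ is a compact $1$-manifold with $\partial M=g^{-1}(0)$, and the classification of compact $1$-manifolds (components are circles or closed intervals) gives $|\partial M|$ even. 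This is more self-contained than the paper's route --- it uses only the one fact that ultimately underlies the homotopy invariance of the mod~$2$ degree anyway (a compact $1$-manifold has an even number of boundary points), and it sidesteps having to set up degree theory for a map into the non-compact target $\bR$ --- while the paper's argument has the advantage of being the template that generalizes immediately to maps between manifolds of the same dimension in any dimension. Both approaches correctly use compactness of $\gamma$ and regularity of $0$ to guarantee that $m_\gamma(JE)$ is finite, so there is no gap.
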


\begin{proof} Consider the map $\la JE, \nabla f\ra$ as a map $g:[a,b]\to \bR$ such that $g(a)=g(b)$ {Here we used the {implicit assumption} that $\gamma$ is a closed curve, $\gamma \sim S^1$}. Then, the graph of $g$ must cross even number of times on the $x$-axis to come back the initial value $g(a)=g(b)$.  
	
Formally, $\bR$ is contractible, so any continuous map from $S^1\to\bR$ is homotopic to a (non-zero) constant map, whose number of zeros is zero. By the modulo $2$ degree of maps between manifolds of the same dimension, homotopic maps have the same number of zeros modulo $2$. Thus, $\la JE, \nabla f\ra$ must have an even number of zeros on $S^1$. See \cite{Milnor}. 
\end{proof}

\begin{Def}\rm
A homotopy $H:[0,1]\times \gamma\to \bR^2\setminus\{0\}$ on vector fields $E$ is called {\it admissible}, if $m_{\gamma}(JH(t,\cdot))\equiv\const$ for all $t\in[0,1]$. Two vector fields $E_0,E_1$ are called {\it admissibly homotopic}, if there is an admissible homotopy $H$ connecting the two: $E_i=H(i,\cdot)$ for $i=0,1$.

Similarly, a homotopy $H:[0,1]\times S^1\to \bR^2$ on rings $\gamma$ is called {\it admissible}, if $m_{H(t,\cdot)} (JE)\equiv\const$ for all $t\in[0,1]$.
	\END
\end{Def}

\begin{Def}\rm\label{def:ring_ind} Let $\gamma=f^{-1}(0)\subset\bR^2$ be a ring and $E:\gamma\to \bR^2\setminus\{0\}$ be a vector field such that $0$ is a regular value of $\la JE,\nabla f\ra$. Let $m_\gamma(JE)$ be the (even) number of zeros of $\la JE,\nabla f\ra$.
	\begin{itemize}
		\item [(a)] If $m_\gamma(JE)=0$, then $\la JE, \nabla f\ra\ne 0$ for all points on $\gamma$. Define the {\it ring index} of $\gamma$ by
		\[\rind_{\gamma}(E):=\frac{1}{l(\gamma)}\underset{s\in \gamma}{\oint}\sign \la JE, \nabla f\ra ds,\]
		where $l(\gamma)$ is the length of $\gamma$ and {$\sign (c)=\pm 1$ for $\pm c>0$.} 
		
	\item[(b)]	If $m_\gamma(JE)>0$, then $\gamma$ is divided {into $m:=m_\gamma(JE)$ intervals $\gamma_1,\dots, \gamma_m$, between consecutive zeros}. Define the {\it ring index} of $\gamma$ by
	\[\rind_\gamma(E)=\overset{m}{\underset{i=1}{\sum}} \rind_{\gamma_i} (E).\]
	\end{itemize}
	\END
\end{Def}	

\begin{ex}\rm\label{ex:rind_E_pm}
Let $E^\pm$ be the two vector fields given in Example \ref{ex:f_2}(a)-(b). Then, we have $m_\gamma(JE^\pm)=0$ and $\rind_{\gamma}(E^\pm)=\pm 1$.
\END
\end{ex}

\begin{rmk}\rm
It is interesting to notice that
\[\la JE,\nabla f\ra=\la E, -J\nabla f\ra\]
by the ortho-normal property $-J^2=1$. Thus, the ring index is also equal to
\begin{equation*}
\rind_\gamma(E)=- \frac{1}{l(\gamma)} \oint_\gamma \sign \la E, J\nabla f \ra ds.
\end{equation*}
If $m_\gamma(JE)=0$, then $\la E, J\nabla f \ra$ does not change sign on $\gamma$. Thus, we have
\[\rind_\gamma(E)=- \sign\big (\frac{1}{l(\gamma)} \oint_\gamma \la E, J\nabla f \ra ds\big).
\]
Since $J\nabla f$ is the tangent vector in alignment of the orientation of $\gamma$, the ring index expresses the work done {against $E$ along $\gamma$}.
\END
\end{rmk}

\begin{lem}\label{lem:r1} The ring index satisfies the following.
	\begin{itemize}
		\item [(a)] If $m_\gamma(JE)=0$, then $\rind_\gamma(E) =\sign \la JE, \nabla f\ra\in\{\pm 1\}$.
		\item[(b)] If $m_\gamma(JE)>0$, then $\rind_\gamma(E) =0$.
	\end{itemize} 
\end{lem}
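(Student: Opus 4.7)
The plan is to reduce both parts to an elementary observation about the sign of the continuous function $\la JE,\nabla f\ra$ on the ring $\gamma$, exploiting connectedness in case (a) and a transversality/sign-alternation argument combined with Lemma \ref{lem:even} in case (b).

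For part (a), I would first note that the hypothesis $m_\gamma(JE)=0$ means the continuous function $s\mapsto \la JE(s),\nabla f(s)\ra$ is nowhere zero on $\gamma$. Since $\gamma\simeq S^1$ is connected, by the intermediate value theorem its sign is a globally well-defined constant $\sigma\in\{\pm1\}$. The ring index then collapses to
\[
\rind_\gamma(E)=\frac{1}{l(\gamma)}\oint_\gamma \sigma\,ds=\sigma=\sign \la JE,\nabla f\ra,
\]
which is exactly the claim.

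For part (b), I would argue as follows. By Definition \ref{def:ring_ind}(b), the $m>0$ zeros of $\la JE,\nabla f\ra$ (which is even by Lemma \ref{lem:even}) partition $\gamma$ into $m$ arcs $\gamma_1,\dots,\gamma_m$ in cyclic order. On each open arc $\gamma_i$, the function $\la JE,\nabla f\ra$ has no zeros (these live only at the endpoints), so by connectedness its sign is a well-defined constant $\sigma_i\in\{\pm1\}$, and hence $\rind_{\gamma_i}(E)=\sigma_i$ by the same calculation as in (a). The crucial point is that because $0$ is assumed to be a regular value of $\la JE,\nabla f\ra$, every zero on $\gamma$ is a transverse crossing and the function therefore switches sign at each such zero. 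Consequently the signs alternate along the cycle: $\sigma_{i+1}=-\sigma_i$, with indices modulo $m$. Since $m$ is even, this alternation is cyclically consistent and yields exactly $m/2$ arcs of each sign, so
\[
\rind_\gamma(E)=\sum_{i=1}^m \sigma_i=\frac{m}{2}-\frac{m}{2}=0.
\]

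The only subtlety I expect is the justification of the sign-alternation step, which ultimately rests on the regular-value hypothesis (transverse zero crossings in a $1$-dimensional target). The parity of $m$ supplied by Lemma \ref{lem:even} is what makes the alternation around the closed cycle $\gamma$ self-consistent; without it, the cancellation argument would fail.
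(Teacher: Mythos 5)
Your proof is correct and follows essentially the same route as the paper: in (a) the constant sign collapses the integral, and in (b) the even number of zeros from Lemma \ref{lem:even} partitions $\gamma$ into arcs of alternating sign whose contributions cancel. Your explicit justification of the sign alternation via the regular-value (transverse crossing) hypothesis is a detail the paper leaves implicit, but it is the same argument.
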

\begin{proof} (a) Since $m_\gamma(JE)=0$, $\la JE, \nabla f\ra$ vanishes nowhere on $\gamma$ and $\sign\la JE, \nabla f\ra \equiv \pm 1$ is a constant function. Thus,
	\[\rind_\gamma(E)=\sign\la JE, \nabla f\ra\frac{1}{l(\gamma)}\int_{s\in \gamma}ds=\sign\la JE, \nabla f\ra.\]

\noi(b) By Lemma \ref{lem:even}, $m_\gamma(JE)$ is an even number. Thus, $\gamma$ is divided into even number of  intervals, on each of which $\rind_{\gamma_i}(E)\in\{\pm 1\}$ has an alternating sign. Therefore,  $\rind_{\gamma}(E)=0$.

\end{proof}

\begin{proposition}\label{prop:ring_ind} The ring index is a homotopy invariant under all admissible homotopies of vector fields $E$ and of rings $\gamma$.		
\end{proposition}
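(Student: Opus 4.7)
The plan is to exploit the very restrictive value set of the ring index established by Lemma \ref{lem:r1}: it takes only the values $+1$, $-1$, or $0$, and which one it takes is determined entirely by whether $m_\gamma(JE)=0$ and, in that case, by the sign of $\la JE,\nabla f\ra$ on $\gamma$. Since admissibility is precisely the condition that $m$ stays constant along the homotopy, the only way the index could jump is by a sign flip of $\la JE,\nabla f\ra$ in the regime $m\equiv 0$, and a straightforward connectedness argument will rule this out.

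For an admissible vector-field homotopy $H:[0,1]\times \gamma\to \bR^2\setminus\{0\}$, I would split into two subcases. If $m_\gamma(JH(t,\cdot))>0$ for all $t$, Lemma \ref{lem:r1}(b) immediately yields $\rind_\gamma(H(t,\cdot))\equiv 0$. If instead $m_\gamma(JH(t,\cdot))\equiv 0$, then the continuous function $\Phi(t,x):=\la JH(t,x),\nabla f(x)\ra$ is non-vanishing on the connected product $[0,1]\times \gamma$ and hence has constant sign there; by Lemma \ref{lem:r1}(a), $\rind_\gamma(H(t,\cdot))$ equals this common sign throughout $[0,1]$ and is therefore constant.

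For admissible homotopies of rings $H:[0,1]\times S^1\to\bR^2$ with $E$ fixed, the same dichotomy applies, but now the defining function $f=f_t$ for $\gamma_t:=H(t,S^1)$ must also be allowed to vary with $t$. The main obstacle is to produce a continuous family $f_t$ satisfying $\gamma_t=f_t^{-1}(0)$ with $J\nabla f_t$ aligned with the counter-clockwise orientation of $\gamma_t$ inherited from $H(t,\cdot)$; this can be arranged locally by a signed distance function in a tubular neighborhood of $\gamma_t$, whose existence and continuous dependence on $t$ follow from the fact that each $\gamma_t$ is a smoothly embedded circle with a continuously varying unit normal. Once such $f_t$ is fixed, admissibility again freezes $m_{\gamma_t}(JE)$, and the same connectedness argument applied to $\Psi(t,u):=\la JE(H(t,u)),\nabla f_t(H(t,u))\ra$ on $[0,1]\times S^1$ shows that $\rind_{\gamma_t}(E)$ is either identically $\pm 1$ or identically $0$, completing the proof.
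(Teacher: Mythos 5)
Your proof is correct and follows essentially the same route as the paper, whose proof simply invokes Lemma \ref{lem:r1} together with the observation that admissible homotopies keep $m_\gamma(JE)$ constant. Your connectedness argument for the constancy of $\sign \la JE,\nabla f\ra$ in the $m\equiv 0$ case, and your construction of a continuous family $f_t$ in the ring-homotopy case, merely make explicit details that the paper's one-line proof leaves implicit.
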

\begin{proof}
It follows from Lemma \ref{lem:r1}, since admissible homotopies keep the number $m_\gamma(JE)$ constant, under which condition the value of $\rind_\gamma(E)$ remains the same.	

\end{proof}

\begin{rmk}\rm \label{rmk_rev_rind}
The reverse of Proposition \ref{prop:ring_ind} does not hold in general. {That is, two vector fields  having the same ring index are {\it not} necessarily admissibly homotopic.} Indeed, if $E_1,E_2$ are vector fields with $m_\gamma(JE_i)=2i>0$. Then, $\rind_\gamma(E_i)=0$ for $i=1,2$. However, the two vector fields are not admissibly homotopic to each other, due to their different numbers $m_\gamma(JE_1)\ne m_\gamma(JE_2)$.
\END
\end{rmk}

It turns out that the reverse statement of Proposition \ref{prop:ring_ind} holds for vector fields with $m_\gamma(JE)=0$.

\begin{proposition}\label{prop:ring_ind_2}  Let $E^\pm$ be the two vector fields given in Example \ref{ex:f_2}(a)-(b). Then,  for any vector field $E$ with $m_\gamma(JE)=0$, we have
	\begin{itemize}
		\item[(i)] $\rind_\gamma(E)=-1$ if and only if $E$ is admissibly homotopic to $E^-$;
		\item[(ii)] $\rind_\gamma(E)=1$ if and only if $E$ is admissibly homotopic to $E^+$.
	\end{itemize}
	
\end{proposition}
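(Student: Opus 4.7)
My plan: The ``if'' direction of both (i) and (ii) follows immediately from Proposition \ref{prop:ring_ind} combined with Example \ref{ex:rind_E_pm}: admissible homotopies preserve the ring index, and $\rind_\gamma(E^\pm)=\pm 1$. So all the content of the proposition lies in the ``only if'' direction, which I will establish by an explicit admissible homotopy built from straight-line interpolation in the target $\bR^2\setminus\{0\}$.

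For (i), suppose $\rind_\gamma(E)=-1$ and $m_\gamma(JE)=0$. By Lemma \ref{lem:r1}(a) this forces $\la JE,\nabla f\ra<0$ everywhere on $\gamma$. On the unit circle with $f=x_1^2+x_2^2-1$ and $E^-=(-x_2,x_1)$ one computes directly $\la JE^-,\nabla f\ra=\la (-x_1,-x_2),(2x_1,2x_2)\ra=-2<0$ as well. I then define
\[
H(t,x):=(1-t)E(x)+tE^-(x),\q (t,x)\in[0,1]\times\gamma,
\]
and verify admissibility using linearity of $J$ and of the inner product:
\[
\la JH(t,x),\nabla f(x)\ra=(1-t)\la JE(x),\nabla f(x)\ra+t\la JE^-(x),\nabla f(x)\ra<0,
\]
being a convex combination of strictly negative numbers. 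In particular $JH(t,x)\ne 0$, so $H(t,x)\in\bR^2\setminus\{0\}$, and $m_\gamma(JH(t,\cdot))\equiv 0$ for every $t\in[0,1]$, which is precisely the admissibility condition. Since $H(0,\cdot)=E$ and $H(1,\cdot)=E^-$, this is the required admissible homotopy.

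Part (ii) is completely symmetric: when $\rind_\gamma(E)=+1$, Lemma \ref{lem:r1}(a) gives $\la JE,\nabla f\ra>0$ on $\gamma$, matching $\la JE^+,\nabla f\ra=+2$, and the analogous straight-line homotopy $H(t,x)=(1-t)E(x)+tE^+(x)$ stays inside the opposite open half-plane throughout. The only point requiring care is ensuring $H(t,x)$ never vanishes along the homotopy, but this is automatic: the open half-plane $\{v\in\bR^2:\la v,\nabla f(x)\ra<0\}$ (respectively $>0$) is convex and excludes the origin, and at each $x\in\gamma$ the vectors $JE(x)$ and $JE^\pm(x)$ both belong to it. There is no deeper obstacle here; the argument is essentially a convexity remark applied pointwise along $\gamma$.
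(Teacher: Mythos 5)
Your proof is correct, and it differs from the paper's in one substantive place. The "if" direction is the same in substance (you quote Proposition \ref{prop:ring_ind} together with Example \ref{ex:rind_E_pm}; the paper reruns the sign argument of Lemma \ref{lem:r1} along the homotopy), and your "only if" direction starts from the same observation as the paper, namely that Lemma \ref{lem:r1}(a) forces $\la JE,\nabla f\ra$ to have a constant sign on $\gamma$. But from there the routes diverge: the paper deforms the scalar map $\la JE,\nabla f\ra$ to a constant and then asserts that this homotopy of scalar functions "leads to" an admissible homotopy of vector fields from $E$ to $E^-$ (resp.\ $E^+$), without constructing that lift; you instead interpolate the vector fields themselves, $H(t,x)=(1-t)E(x)+tE^{\mp}(x)$, and use linearity of $J$ and of $\la\cdot,\nabla f(x)\ra$ together with convexity of the open half-plane $\{v\in\bR^2:\la v,\nabla f(x)\ra<0\}$ (resp.\ $>0$) to check in one stroke that $H$ never vanishes and that $m_\gamma(JH(t,\cdot))\equiv 0$, which is exactly the admissibility condition. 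Your construction is thus more explicit and fills in the lifting step that the paper leaves as a sketch; it also corrects a small slip in the paper's proof, which asserts $\la JE^-,\nabla f\ra\equiv -1$ whereas the value on the unit circle is $-2$ (as you compute; only the sign matters). Both arguments have the same scope and rely on the same key lemma, so the difference is one of rigor and economy rather than generality, with your pointwise convexity remark being the cleaner way to realize the paper's intended homotopy.
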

\begin{proof} (i) If $\rind_\gamma(E)=-1$, then by Lemma \ref{lem:r1}, $\la JE,\nabla f\ra<0$ on the whole $\gamma$. Thus, the map $\la JE,\nabla f\ra<0$ can be continuously deformed to the constant map $-1$ on $\gamma$, by an admissible homotopy. Since $\la JE^-, \nabla f\ra \equiv -1$, this homotopy leads to an admissible homotopy from $E$ to $E^-$. Conversely, if $E$ is admissibly homotopic to $E^-$, then at every $t\in [0,1]$, we have $\sign \la JH(t,\cdot), \nabla f\ra=-1$. Thus, by Lemma \ref{lem:r1}(i), $\rind_\gamma(H(t,\cdot))=-1$ for  all $t\in [0,1]$. Especially, we have $\rind_\gamma(E)=-1$. 
	
	The part (ii) is parallel.
\end{proof}

\begin{Def}\rm
	A vector field $E:\gamma \to \bR^2\setminus\{0\}$ is called {\it rotating}, if $m_\gamma(JE)=0$.
	\END
\end{Def}

By Proposition \ref{prop:ring_ind_2}, the ring index gives a topological classification of rotating fields under admissible homotopies.

\subsection{Relation to the Winding Number}\label{subsec_rel}

Consider $E:\gamma\to \bR^2\setminus\{0\}$. Recall that the {\it winding number}  $w(E,\gamma)$ of $E$ along $\gamma$ counts how many times the image $E(\gamma)\subset \bR^2\setminus\{0\}$ has gone around the origin counterclockwise.

\begin{ex}\rm\label{ex:wind_E_pm}
	Let $E^\pm$ be the two vector fields given in Example \ref{ex:f_2}(a)-(b). Then, the winding number of $E^\pm$ is in both cases,  $w(E^\pm, \gamma)=1$. To the contrary, the ring index distinguishes the two by being  $\rind_\gamma(E^\pm)=\pm 1$ (cf. Example \ref{ex:rind_E_pm}).
	\END
\end{ex}

\begin{lem}\label{lem:rw} For rotating fields $E$,
we have $\rind_\gamma(E) =\sign \la JE, \nabla f\ra\cdot w(E,\gamma)$.
\end{lem}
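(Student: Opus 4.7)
My plan is to reduce the identity via Lemma \ref{lem:r1}(a) to showing $w(E,\gamma) = 1$, then obtain this from a tangent-angle tracking argument. Since $E$ is rotating, Lemma \ref{lem:r1}(a) already gives $\rind_\gamma(E) = \sign \la JE, \nabla f\ra$, a constant in $\{\pm 1\}$. Hence it suffices to establish that $w(E,\gamma) = 1$ for every rotating $E$, because then $\sign \la JE,\nabla f\ra \cdot w(E,\gamma) = \sign \la JE,\nabla f\ra = \rind_\gamma(E)$.

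To compute $w(E,\gamma)$, I would parametrize $\gamma$ by arc length $s \in [0, l(\gamma)]$ with unit tangent $T(s)$; by the sign convention on $f$ (which makes $J\nabla f$ coincide with the counter-clockwise orientation), $J\nabla f / |\nabla f| = T$. Using $\la JE, \nabla f\ra = -\la E, J\nabla f\ra$, the rotating condition becomes $\la E, T\ra \neq 0$ on $\gamma$; by continuity and connectedness, the tangential component of $E$ has constant sign. Taking continuous lifts $\theta_E(s)$ and $\theta_T(s)$ of the angles of $E(\gamma(s))$ and $T(s)$, this constant-sign property confines $\theta_E(s) - \theta_T(s)$ to an open interval of length $\pi$ for all $s$ (either $(-\pi/2, \pi/2)$ or $(\pi/2, 3\pi/2)$). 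Consequently the total variations agree: $\theta_E(l(\gamma)) - \theta_E(0) = \theta_T(l(\gamma)) - \theta_T(0)$.

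By the Hopf Umlaufsatz, the total turning of the unit tangent along a simple closed plane curve equals $\pm 2\pi$; under the counter-clockwise convention the value is $+2\pi$. Dividing by $2\pi$ yields $w(E,\gamma) = 1$, and the formula of the lemma follows.

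The main obstacle I anticipate is making the angle-lifting step rigorous: one must ensure that $\theta_E - \theta_T$ stays within a half-turn globally, not merely locally, which requires both connectedness of $\gamma$ and the intermediate value theorem applied to $\la E, T\ra$, while being careful that $E$ is only prescribed on $\gamma$ (not on a neighbourhood). As a backup, one could instead invoke Proposition \ref{prop:ring_ind_2} to admissibly homotope $E$ to $E^{\pm}$, observing that admissible homotopies remain in $\bR^2 \setminus \{0\}$ and therefore preserve winding numbers, and then directly verify $w(E^{\pm}, \gamma) = 1$ as in Example \ref{ex:wind_E_pm}.
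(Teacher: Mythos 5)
Your proof is correct, but it takes a genuinely different route from the paper's. The paper argues via the classification of rotating fields up to admissible homotopy: by Proposition \ref{prop:ring_ind_2}, a rotating $E$ with $\la JE,\nabla f\ra<0$ (resp.\ $>0$) is admissibly homotopic to $E^-$ (resp.\ $E^+$), and since admissible homotopies take values in $\bR^2\setminus\{0\}$ they preserve the winding number, so $w(E,\gamma)=w(E^{\pm},\gamma)=1$ and the identity follows from $\rind_\gamma(E)=\rind_\gamma(E^{\pm})=\pm1$; this is exactly your ``backup'' argument. Your primary argument instead proves $w(E,\gamma)=1$ directly: the rotating condition $\la JE,\nabla f\ra\neq 0$ is equivalent (via $\la JE,\nabla f\ra=-|\nabla f|\,\la E,T\ra$) to $E$ having a nowhere-vanishing tangential component of constant sign, so a continuous lift of the angle of $E$ stays within a half-turn of a lift of the tangent angle (the connectedness/lifting point you flag is handled correctly: the difference of lifts lives in one component of the preimage of an open arc of length $\pi$, and since both angles return to their start modulo $2\pi$, the total turnings must coincide), and the Hopf Umlaufsatz then gives total turning $+2\pi$ for the positively oriented ring, hence $w(E,\gamma)=1$; combined with Lemma \ref{lem:r1}(a) this yields the statement. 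In effect you establish Corollary \ref{cor:w1} first and deduce the lemma from it, reversing the paper's logical order (the paper deduces Corollary \ref{cor:w1} from this lemma). What each buys: the paper's proof is shorter given that Proposition \ref{prop:ring_ind_2} and Example \ref{ex:wind_E_pm} are already in place and stays entirely inside the admissible-homotopy framework; yours is more self-contained and explanatory (it shows directly why never-normal fields wind exactly once, via a dog-on-a-leash comparison with the tangent), at the cost of invoking the classical Umlaufsatz and of using the standard angle-lift description of $w(E,\gamma)$ rather than the integral expression written in Subsection \ref{subsec_rel}.
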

\begin{proof} If $\la JE, \nabla f\ra\ne 0$ for all $(x_1,x_2)\in \gamma$, then $\la JE, \nabla f\ra <0$ or $>0$ always. Assume the first case.  By the homotopy invariance of the winding number, we have
\[w(E,\gamma)=w(\tilde E,\gamma),\]
where $\tilde E$ can be chosen to be a map such that $\la JE, \nabla f\ra \equiv -1$ on the whole $\gamma$. One such choice is  $\tilde E=E^-$. Thus, by  Proposition \ref{prop:ring_ind_2},
\[\rind_\gamma(E) =\rind_\gamma(E^-)=-1=-w(E^-,\gamma)=-w(E,\gamma).\]
The other case $\la JE, \nabla f\ra >0$ is similar.
 \end{proof}

\begin{cor}\label{cor:w1} For rotating fields $E$, we have $w(E,\gamma)=1$.
\end{cor}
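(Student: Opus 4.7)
The plan is to derive the corollary as an immediate consequence of combining the two preceding results, Lemma \ref{lem:r1}(a) and Lemma \ref{lem:rw}, both of which apply to rotating fields. No new computation seems necessary.

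First I would invoke Lemma \ref{lem:rw}, which for a rotating field $E$ gives the identity
\[
\rind_\gamma(E) = \sign \la JE, \nabla f\ra \cdot w(E,\gamma).
\]
Next, I would apply Lemma \ref{lem:r1}(a): since $E$ is rotating, $m_\gamma(JE)=0$, so $\la JE,\nabla f\ra$ is nowhere zero on $\gamma$, and the lemma yields
\[
\rind_\gamma(E) = \sign \la JE, \nabla f\ra \in \{\pm 1\}.
\]

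Equating the two expressions for $\rind_\gamma(E)$ gives $\sign \la JE, \nabla f\ra = \sign \la JE, \nabla f\ra \cdot w(E,\gamma)$, and because $\sign \la JE,\nabla f\ra = \pm 1$ is invertible in $\bR$, cancellation immediately yields $w(E,\gamma)=1$.

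There is no genuine obstacle here; the only thing to be careful about is making sure that $\la JE,\nabla f\ra$ does not vanish on $\gamma$ before taking its sign, which is precisely the content of the rotating hypothesis $m_\gamma(JE)=0$. The geometric content, namely that every rotating field on $\gamma$ has winding number $1$, is already encoded in Proposition \ref{prop:ring_ind_2}, which forces $E$ to be admissibly homotopic to one of $E^\pm$; the winding number, being a standard homotopy invariant, then equals $w(E^\pm,\gamma)=1$ as recorded in Example \ref{ex:wind_E_pm}. I would mention this alternative route in one line as a sanity check, but keep the algebraic derivation above as the formal proof.
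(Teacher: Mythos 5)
Your argument is exactly the paper's: the published proof also derives the corollary by combining Lemma \ref{lem:r1}(a) with Lemma \ref{lem:rw} and cancelling the nonzero sign factor. The alternative check via Proposition \ref{prop:ring_ind_2} and Example \ref{ex:wind_E_pm} is a fine aside but adds nothing beyond the paper's route.
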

\begin{proof} It is a consequence of Corollary \ref{lem:r1} and Lemma \ref{lem:rw}.
\end{proof}

\begin{cor}\label{lem:rw_ne_1} Every vector field with $w(E,\gamma)\ne 1$ has a ring index zero.

\end{cor}
\begin{proof}
Let $E$ be a vector field with $w(E,\gamma)\ne 1$. By Corollary \ref{cor:w1}, $E$ is not a rotating field and thus,  $m_\gamma(JE)>0$. By Lemma \ref{lem:r1}(b), $\rind_\gamma(E)=0$.
\end{proof}

\subsection{Robustness of Degeneracy Rings}\label{subsec:robust}

By Remark \ref{rmk_rev_rind} and Corollary \ref{lem:rw_ne_1}, the ring index does not distinguish among vector fields having winding numbers other than $1$. These include vector fields of winding number $0$ (constant field), $-1$ (saddle) or $2$ (dipole).

We will show that vector fields with winding numbers other than $1$ cannot sustain a stable ring of degeneracy on a simply connected region. Thus, the only rings of degeneracy that exist robustly in $\bR^2$ or $S^2$ are those with non-zero ring index and supported by rotating fields.

\begin{ex}\rm \label{ex:E_pm_n}
Consider the representative vector fields for winding numbers $\pm n$ with $n=0,1,\dots$. For a parametrization $(\cos t, \sin t)$ of $\gamma=S^1$ with $t\in [0,2\pi]$, the vector fields $E_n=(-\sin(nt), \cos(nt))$ and $E_{-n}=(\sin(nt),\cos(nt))$ have winding numbers $w(E_n)=n$ and $w(E_{-n})=-n$, respectively. Moreover, 
\begin{align*}
\la JE_n, \nabla f \ra&=-\cos(n-1) t\\
\la JE_{-n}, \nabla f \ra&=-\cos(n+1) t,
\end{align*}
where $f$ is chosen so that $J\nabla f=(-\sin t,\cos t)$ coincides with the tangent direction of $S^1$ in the counter-clock wise sense.
Thus,  the vector fields $E_{n+2}$ and $E_{-n}$ have the same number of zeros of $\la JE,\nabla f\ra$ on $S^1$, being equal to
\[m_\gamma(JE_{n+2})=m_\gamma(JE_{-n})=2(n+1).\]

{Conversely}, one can show that a vector field $E$ with $m_\gamma(JE)=2(n+1)$ is homotopic to either  $E_{n+2}$ or $E_{-n}$ by properties of winding numbers.
\END
\end{ex}

\begin{ex}\rm
	Consider a vector field $E$ with $m_\gamma(JE)=2$ for $\gamma=S^1$ and $f$ being chosen so that $J\nabla f$ points at the tangent of $S^1$ in the counter-clock wise direction. The degenerate dynamics of (\ref{eq:sym2d_gen_JE}) is depicted in Figure \ref{F:2_zeros}.
	\begin{figure}[!htb]
	\centerline{
		\includegraphics[width=.25\textwidth]{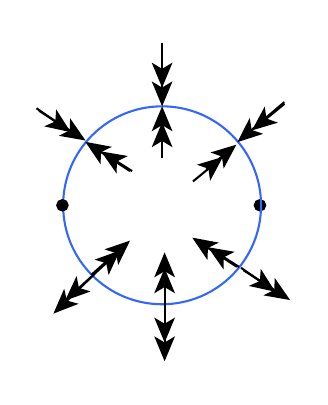}
	} 
\vskip0cm
	\caption{Dynamics of (\ref{eq:sym2d_gen_JE}) for a vector field $E$ having $m_\gamma(JE)=2$ on  $\gamma=S^1$.}
	\label{F:2_zeros}  
\end{figure} 

The winding number of $E$ along $\gamma$ is either $0$ or $2$, being  homotopic to either the constant field $E_{0}$ or to the dipole $E_2$, respectively. See Figure \ref{F:2_zero_a}, where  $\tilde E_2$ is a variation of $E_2$ that has only simple zeros.

	\begin{figure}[!htb]
	\centerline{
		\includegraphics[width=.5\textwidth]{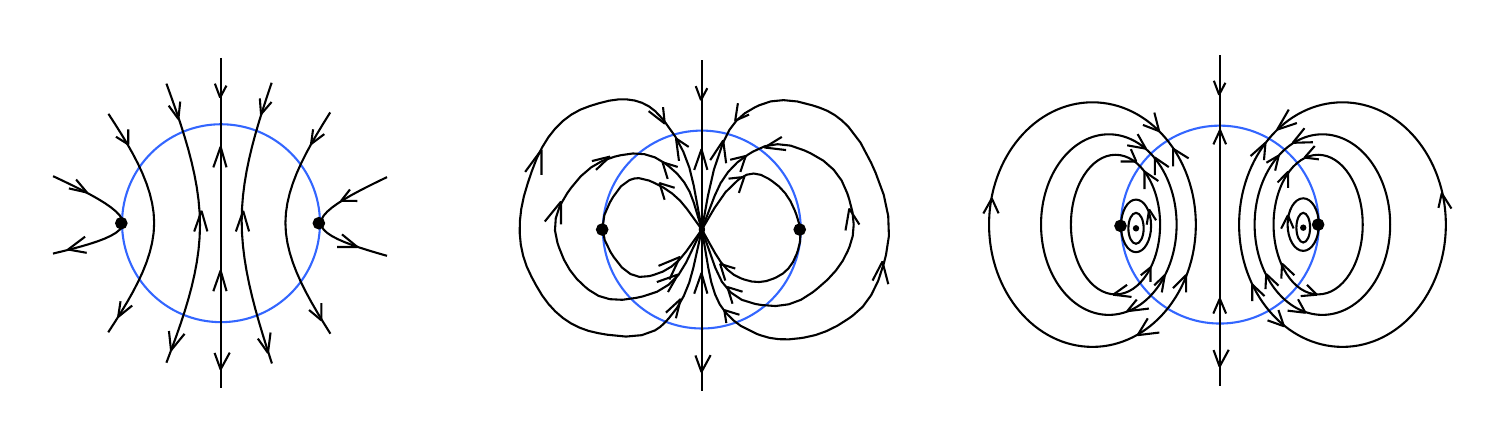}
	} 
	\vskip0cm
	\caption{Three kinds of vector fields $E_0$ (left), $E_2$ (middle) and $\tilde E_2$ (right) that have degenerate dynamics from Figure \ref{F:2_zeros}, where $\tilde E_2$ is homotopic to $E_2$ admissibly. All lead to unstable degeneracies. See Figure \ref{F:2_zero_b}.}
	\label{F:2_zero_a}  
\end{figure}  

The zeros on the ring in all cases {allow} the whole ring to {be continuously deformed collapsing it to a point.} The one side of in-flow finds its way out to the other side of out-flow. See figures \ref{F:2_zero_b}-\ref{F:2_zero_d}.

\begin{figure}[!htb]
	\centerline{		
		\includegraphics[width=.5\textwidth]{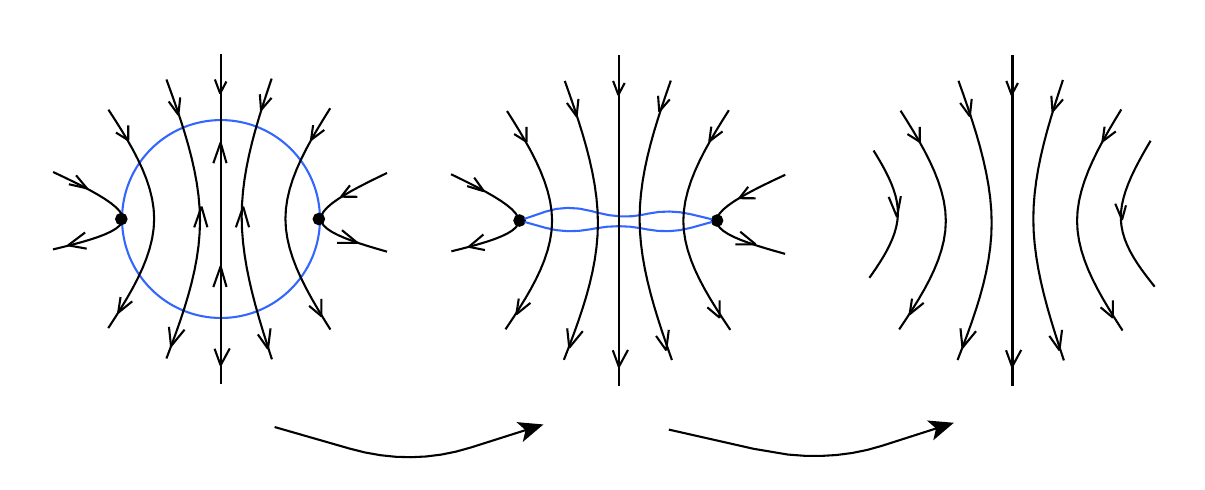}
	} 
	\vskip0cm
	\caption{The disappearance of degeneracy for $E_0$.}
	\label{F:2_zero_b}  
\end{figure}  

\begin{figure}[!htb]
	\centerline{				
		\includegraphics[width=.5\textwidth]{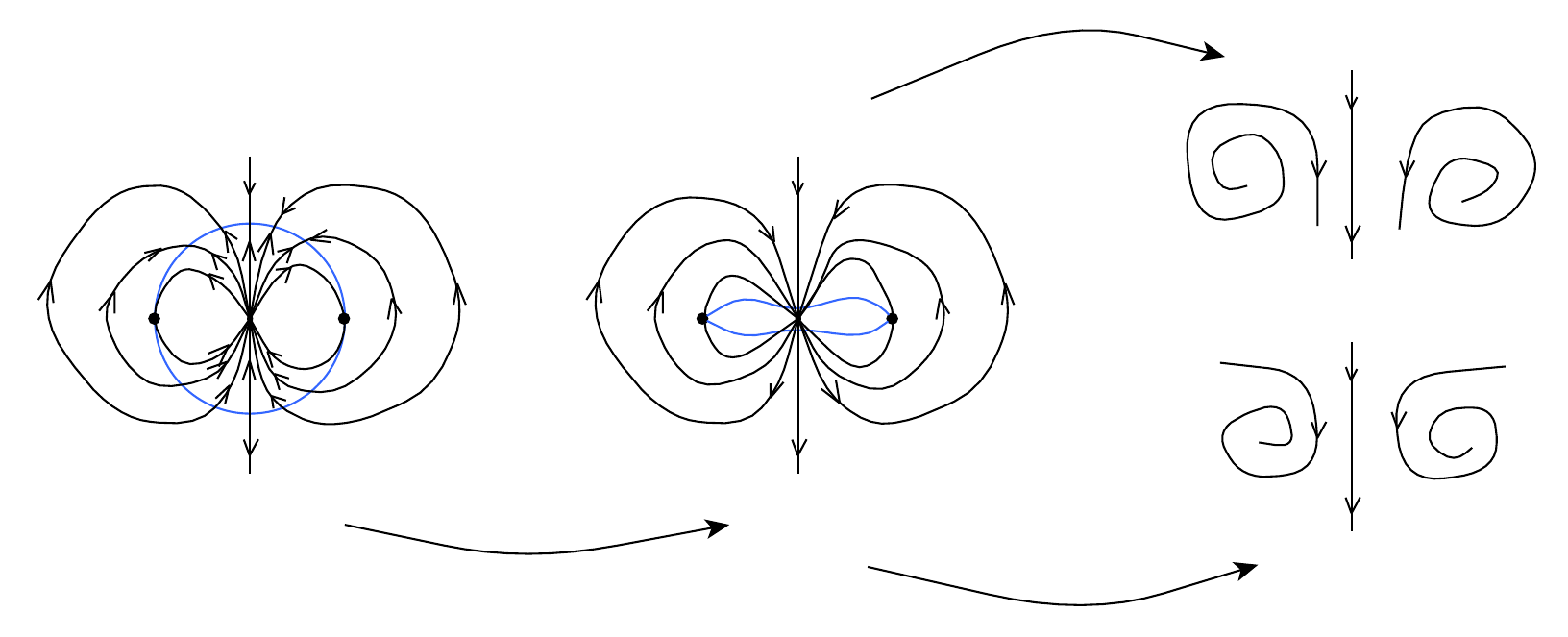}
	} 
	\vskip0cm
	\caption{The disappearance of degeneracy for $E_2$.}
	\label{F:2_zero_c}  
\end{figure} 
	
\begin{figure}[!htb]
	\centerline{				
		\includegraphics[width=.5\textwidth]{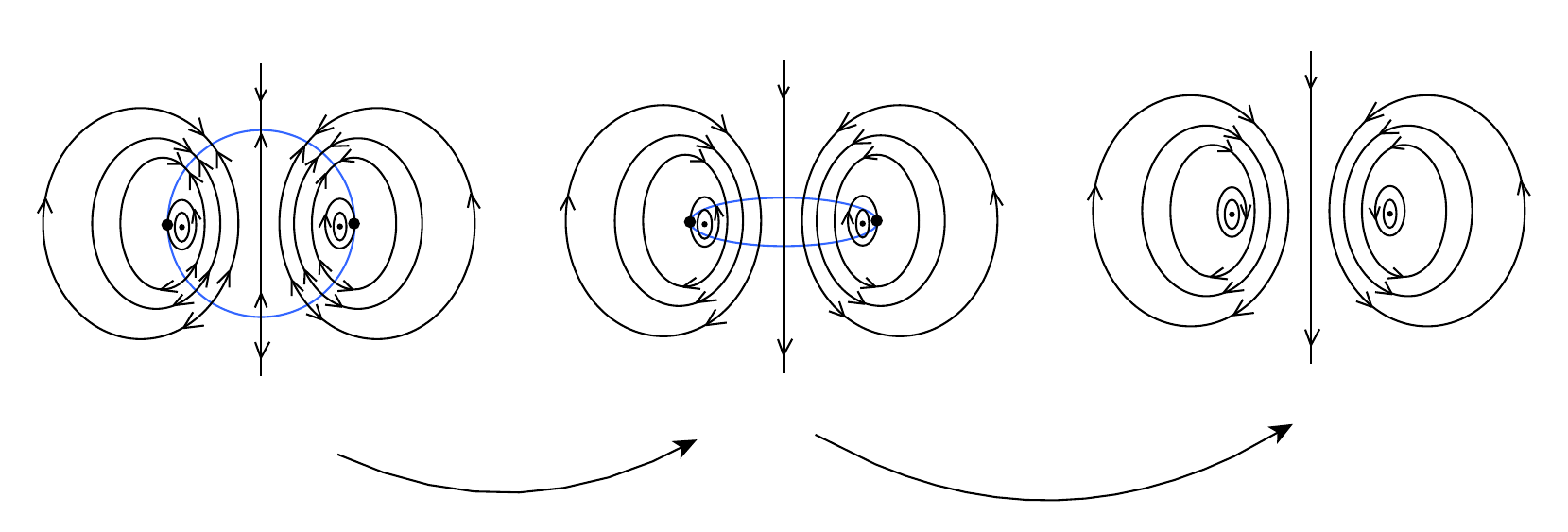}
	} 
	\vskip0cm
	\caption{The disappearance of degeneracy for $\tilde E_2$.}
	\label{F:2_zero_d}  
\end{figure} 	
\END
\end{ex}

In general, a non-zero even number of zeros of $\la JE, \nabla f\ra$ on the ring gives rise to the collapse of the  degeneracy ring, by contracting the in- and out-flow pair-wise. Therefore, we have

\begin{proposition}\label{prop:robust}
The only degeneracy rings that are robust against deformations are those with ring index $\pm 1$. 
\end{proposition}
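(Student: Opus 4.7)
The plan is to split the statement along the value of $\rind_\gamma(E)$, which is either $0$ or $\pm 1$ by Lemma \ref{lem:r1}.

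\textbf{Rings of index $\pm 1$ are robust.} If $\rind_\gamma(E)=\pm 1$, Lemma \ref{lem:r1}(a) forces $m_\gamma(JE)=0$, so $\la JE,\nabla f\ra$ has constant sign on $\gamma$. Dynamically, this means the infinite-velocity vector $JE$ is transverse to $\gamma$ everywhere with a consistent orientation, giving the ``both-sides-in'' or ``both-sides-out'' pattern of Example \ref{ex:f_2}. Any sufficiently small admissible perturbation of $f$ (or of $E$) preserves this sign on a tubular neighborhood of $\gamma$, hence preserves $m_\gamma(JE)=0$; by Proposition \ref{prop:ring_ind}, $\rind_\gamma(E)=\pm 1$ persists and $\gamma$ cannot disappear.

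\textbf{Rings of index $0$ can be collapsed.} If $\rind_\gamma(E)=0$, Lemma \ref{lem:r1}(b) together with Lemma \ref{lem:even} gives $m_\gamma(JE)=:2k>0$, equivalently, by Lemma \ref{lem:rw_ne_1}, the winding number satisfies $w(E,\gamma)\ne 1$. Let $p_1,\dots,p_{2k}$ be the cyclically ordered tangent points on $\gamma$; between consecutive tangent points the sign of $\la JE,\nabla f\ra$ alternates, so the $2k$ subarcs split into $k$ in-flow arcs interleaved with $k$ out-flow arcs. At each $p_i$ the vector $JE$ is aligned with the tangent of $\gamma$, opening a local ``corridor'' through which trajectories can bypass $\gamma$. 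The idea is to build a deformation $\gamma_s$ of $\gamma$ that pinches each adjacent in/out pair together across its shared tangent point. Concretely, using that $0$ is a regular value of $\la JE,\nabla f\ra$ at each $p_i$, a local straightening chart reduces the picture at $p_i$ to an explicit normal form where the pinching is implemented by an obvious linear interpolation; globally one assembles these local moves with a partition of unity supported near the $p_i$, yielding a ring family $\gamma_s$ that shrinks to a point. This mirrors exactly the pictures in Figures \ref{F:2_zero_b}--\ref{F:2_zero_d}.

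\textbf{Main obstacle.} The delicate step is not the local pinching itself, but guaranteeing that the deformation can be carried out \emph{through rings} for all intermediate parameters $s$, i.e.\ that the level sets of the perturbed defining function $f_s$ remain compact $1$-manifolds with $0$ as a regular value. I would invoke transversality of $f$ and Sard-type arguments to justify that the local perturbations can be taken generic so that mid-homotopy self-intersections do not occur; a self-intersection would correspond precisely to a topological bifurcation of the figure-eight type excluded earlier in the paper (Figure \ref{F:self}), and genericity avoids it. Once this is in place, the collapse of any index-$0$ ring is complete.
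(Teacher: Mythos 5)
Your proposal is correct at the same level of rigor as the paper, but it takes a genuinely different route in the crucial index-$0$ case. The paper's proof first uses the homotopy classification by winding number (Example \ref{ex:E_pm_n}) to replace a field with $m_\gamma(JE)=m>0$ by one of the model fields $E_{\frac m2+1}$ or $E_{-\frac m2+1}$, and then appeals to the explicit collapse of the degeneracy ring for these models (the ``pairing up in- and out-flow'' of Figures \ref{F:2_zero_b}--\ref{F:2_zero_d}); you instead work directly with the given field, pinching the ring locally at each tangency point where $\la JE,\nabla f\ra$ vanishes transversally and gluing the local moves with a partition of unity. What the paper's reduction buys is that the global topology of the collapsing family is read off from a few explicit model pictures, so no genericity discussion is needed; what your approach buys is that it avoids invoking the (only sketched) classification of Example \ref{ex:E_pm_n} and makes explicit the alternation of in- and out-flow arcs, and it also supplies the ``positive'' half of the statement --- that rings with $m_\gamma(JE)=0$ persist under small admissible perturbations --- which the paper leaves implicit from Subsection \ref{subsec:robust}. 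Two caveats: your parenthetical ``equivalently, by Lemma \ref{lem:rw_ne_1}, $w(E,\gamma)\ne 1$'' cites the wrong direction (that lemma gives $w\ne 1\Rightarrow \rind=0$; the implication $m_\gamma(JE)>0\Rightarrow w\ne1$ really comes from Corollary \ref{cor:w1} together with the classification in Example \ref{ex:E_pm_n}), though nothing later depends on it; and your appeal to Sard-type genericity to keep $0$ a regular value of $f_s$ throughout the pinching is asserted rather than carried out --- but the paper's own final sentence (``the degeneracy ring can be deformed away by pairing up in- and out-flow'') is no more rigorous on exactly this point, so this is not a gap relative to the paper's standard.
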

\begin{proof} Let $E$ be a vector field with a ring index different from $\pm 1$. Then, by Lemma \ref{lem:r1}, it has $\rind_\gamma(E)=0$ and $m_\gamma(JE)>0$. By Lemma \ref{lem:even}, $m:=m_\gamma(JE)$ is even. By properties of winding number, $E$ is homotopic to either $E_{\frac m2+1}$ or $E_{-\frac m2+1}$ (cf. Example \ref{ex:E_pm_n} for notations). In either case, the degeneracy ring can be deformed away by pairing up in- and out-flow inside the ring.
\end{proof}

\section{2D Sphere} 

Recall that the Euler characteristic of {$S^2$} can be realized by the sum of indices of all isolated zeros of any vector field on $S^2$  (having only isolated zeros), as stated by the Poincar\'e-Hopf theorem. Based on this result, flows on $S^2$ can be classified by the indices of their zeros and the sum of all indices is globally constrained by the topology of $S^2$.

We would like to establish a parallel of the Poincar\'e-Hopf theorem for degenerate flows on $S^2$ -which we assume to be orientable-, using degeneracy indices, which in the case of absence of degeneracy, reduces to the classical Poincar\'e-Hopf theorem.

Given a degenerate flow on $S^2$, we assume that the flow has only isolated zeros {or isolated limit cycles} besides degeneracy rings. We also assume that the degeneracy rings do not intersect any of these zeros {and limit cycles}, nor do they intersect each other. 

We will now define the {\it degeneracy index} for degeneracy rings, limit cycles and isolated zeros. This index includes, besides the winding number for isolated zeros $(S^1)$, a second attribute to account for the orientation of the flow $(\mathbb{Z}_1)$. A positive (resp. negative) coefficient for this second attribute indicates outgoing (resp. ingoing) flux. Thus, the $\mathbb{Z}_1$ component of source and sink isolated zeros have opposite signs, while a saddle carries zero $(\mathbb{Z}_1)$ component. Similarly, a creation and annihilation degeneracy surfaces have $(\mathbb{Z}_1)$ component of opposite signs.

This notation is borrowed from the $S^1$-equivariant degree, where zero orbits of an $S^1$-equivariant map are labeled by their symmetries (orbit types) {\cite{Ruan}}. The idea is that a zero orbit of an $S^1$-equivariant map is $1$-dimensional if it has a discrete symmetry $(\bz_m)$ for some finite subgroup $\bz_m\subset S^1$; or it is $0$-dimensional if it has the full  continuous symmetry $(S^1)$.

\subsection{Degeneracy Rings}\label{subsec:dr}
In Subsection \ref{subsec:def_ri}, we introduced an index for degeneracy rings, which takes value from $\pm 1$ for rotating fields $E$, otherwise it is equal to $0$. Also, as it has been shown in Subsection \ref{subsec:robust}, degeneracy rings enclosing a simply connected region are robust if and only if they have non-zero ring index.

Thus, we assign for the creation (resp. annihilation) degeneracy rings zero winding number and $+1$ (resp. $-1$) to its $\mathbb{Z}_1$ attribute. More precisely, a ring $\gamma$ is called an {\it annihilation} ring, if $\rind_{\gamma}(E)=-1$; it is called a {\it creation} ring, if $\rind_{\gamma}(E)=1$. Define {\it degeneracy index of $\gamma$} by
\begin{align}
\Ind: \q \text{annihilation $\gamma$}\q & \mapsto\q -(\bz_1)
\label{eq:di_ring}\\
\text{creation $\gamma$}\q & \mapsto\q \;\;\; (\bz_1).\notag
\end{align}
\\ 
See Figure \ref{fig:z1_pm} for the degenerate flows it implies.

\begin{figure}[!htb]
	\centerline{
		\includegraphics[width=.4\textwidth]{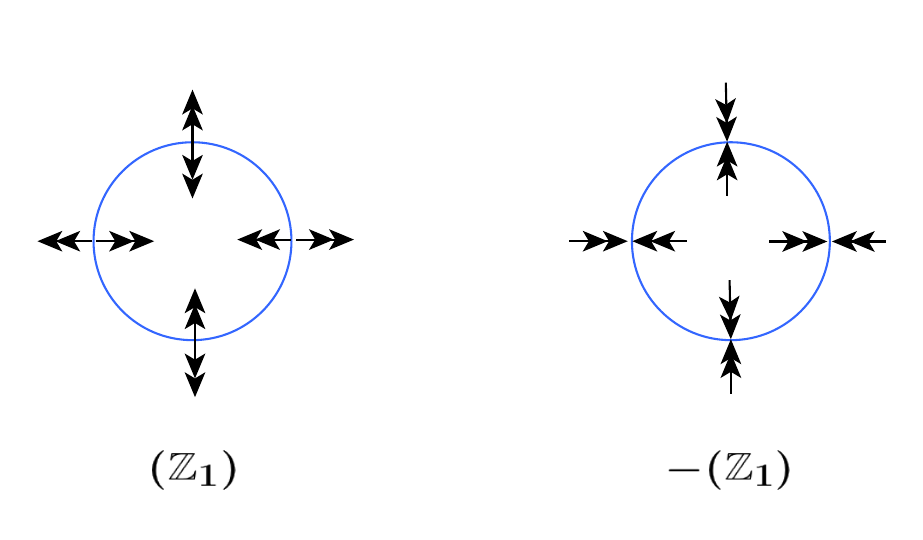}
	} 
\caption{Index of degeneracy rings: (left) $(\bz_1)$ for creation ring; (right) $-(\bz_1)$ for annihilation ring. The sign is determined by the ring index.}
\label{fig:z1_pm}
\end{figure}

\begin{rmk}\rm\label{rmk:ss_diff} Note that a creation ring can naturally enclose a sink inside by extending all the in-flow arrows. Similarly, an annihilation ring can enclose a source inside. However, if a creation ring were to enclose a source inside, there needs to be some additional structure mediating the two {such as a limit cycle}.
\END 
\end{rmk}

\subsection{Limit Cycles}
For isolated limit cycles, we adopt the usual $S^1$-equivariant degree and define the {\it degeneracy index} of a limit cycle by  (cf. \cite{Ruan}) 
\begin{align}
\Ind: \q \text{attracting cycle}\q&\mapsto\q -(\bz_1)
\label{eq:cycle_ind}\\
\text{repelling cycle}\q&\mapsto\q \;\;\; (\bz_1).\notag
\end{align}

\subsection{Isolated Zeros}
From the discussion of Subsection \ref{subsec:robust}, degeneracy rings that  enclose isolated zeros other than sources or sinks cannot be robust. Thus, it is sufficient to define degeneracy index for sources and sinks.

The following observation provides a foundation for the definition.
\begin{lem}\label{lem:ss_ah} {In absence of limit cycles,} a sink is admissibly homotopic to an annihilation ring enclosing a source; a source is admissibly homotopic to a creation ring enclosing a sink (cf. Figures \ref{fig:ss_1} and \ref{fig:ss_2}).
\end{lem}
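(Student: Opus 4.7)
The plan is to construct an explicit contraction-type deformation between the two configurations; by symmetry I describe only the first equivalence (sink $\sim$ annihilation ring enclosing a source), the second being identical with the signs reversed. As the target, I take the model of Example \ref{ex:f_2}(a): the unit ring $\gamma=\{x_1^2+x_2^2=1\}$ with $f=x_1^2+x_2^2-1$ together with $E^-=(-x_2,x_1)$, which via (\ref{eq:sym2d_gen_JE}) produces a source at the origin and an inward flow on the exterior of the ring.

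I would then shrink the ring through the one-parameter family $\gamma_t=\{x_1^2+x_2^2=t^2\}$, $f_t=x_1^2+x_2^2-t^2$ for $t\in(0,1]$, keeping $E^-$ fixed. A direct computation gives $\la JE^-,\nabla f_t\ra = -2(x_1^2+x_2^2)=-2t^2$ on $\gamma_t$, so $m_{\gamma_t}(JE^-)\equiv 0$ and, by Lemma \ref{lem:r1}(a), $\rind_{\gamma_t}(E^-)\equiv -1$ for every $t>0$; the family is therefore admissible on $(0,1]$. To analyze the limit, observe that the velocity field is $\dot x = JE^-/f_t = -x/(|x|^2 - t^2)$, which for each fixed $x\ne 0$ converges as $t\to 0^+$ to $-x/|x|^2$, a globally radially inward field extending continuously through the origin to an ordinary sink of Poincar\'e index $+1$. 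Thus the ring together with its interior source collapses topologically onto a single sink. The parallel statement follows by replacing $E^-$ with $E^+=(x_2,-x_1)$, where $\la JE^+,\nabla f_t\ra=+2t^2$ yields $\rind_{\gamma_t}(E^+)\equiv +1$ and the limit flow is radially outward, hence a source.

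The main obstacle is that the admissible homotopies defined in Section \ref{subsec:def_ri} keep either $\gamma$ or $E$ fixed, whereas the deformation above simultaneously contracts $\gamma_t$ to a point at $t=0$, which lies strictly outside the literal framework. Making the limiting step rigorous requires interpreting the deformation as a homotopy of degenerate dynamical systems in which ring-point annihilation at $t=0$ is the admissible operation, with the guiding invariants being the tangent-point count $m\equiv 0$ (which persists through the limit) and the ring index $-1$ (resp.\ $+1$), matching the convention in (\ref{eq:di_ring}) for the degeneracy index attached to the limiting sink (resp.\ source). Identifying the precise category in which this extension is formally valid—so that birth and death of rings cancel against the Poincar\'e indices of the enclosed zeros—is the delicate point on which a fully rigorous proof hinges.
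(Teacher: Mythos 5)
Your construction is, in substance, the paper's own: the same model pair from Example \ref{ex:f_2}(a), the same shrinking-circle family of degeneracy sets with $E^-$ held fixed, and the same verification that $\la JE^-,\nabla f_t\ra<0$ on the ring, so that $m_{\gamma_t}(JE^-)=0$ and $\rind_{\gamma_t}(E^-)=-1$ throughout. Where you diverge is at the collapse, and the obstacle you flag there is one the paper simply steps past. The paper's homotopy is $H(s,x_1,x_2)=x_1^2+x_2^2-s$ with $s$ running over all of $[-1,1]$, not stopping at the collapse: for $s>0$ the degeneracy set is a ring of radius $\sqrt{s}$, at $s=0$ it is a point, for $s<0$ it is empty, and at $s=-1$ the system $(x_1^2+x_2^2+1)\,\dot x=JE^-$ is an ordinary smooth flow, $\dot x=-x/(|x|^2+1)$, with an honest sink at the origin. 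So the endpoint of the deformation is a genuine sink of a nondegenerate system; the singular limit field $-x/|x|^2$ appears in your version only because you stop at $t=0$, where $f_0=|x|^2$ still vanishes at the origin (and $0$ fails to be a regular value of $f_0$). Once the parameter is allowed to pass through zero, there is nothing left to interpret: the family $\gamma_s=H_s^{-1}(0)$ keeps $E^-$ fixed, satisfies $m_{\gamma_s}(JE^-)=0$ for every $s$ (vacuously once $\gamma_s$ is empty), and this is exactly the mild reading of ``admissible homotopy of rings'' that the lemma's statement presupposes; no new category of ring--point annihilation needs to be erected, and your worry about moving $\gamma$ and $E$ simultaneously does not arise, since $E^-$ never moves.

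A second, smaller omission: the lemma concerns an arbitrary annihilation ring enclosing a source (and an arbitrary sink), not only the model configuration. The paper first reduces the general case to the model via Proposition \ref{prop:ring_ind_2}(i), which says that any rotating field with $\rind_\gamma(E)=-1$ is admissibly homotopic to $E^-$ (and correspondingly with $E^+$ for the creation case). Your proof should state this reduction before running the shrinking-ring deformation; with that step added and the parameter range extended past the collapse, your argument coincides with the paper's.
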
	

\begin{proof} To prove the first part, suppose that we have an annihilation ring enclosing a source. Example \ref{ex:f_2}(a) gives a vector field $E^-$ together with $f=x_1^2+x_2^2-1$ that carries such a degenerate flow. By Proposition \ref{prop:ring_ind_2}(i), every such degenerate flow is indeed admissibly homotopic to $E^-$. Define a homotopy $H(s,x_1,x_2)=x_1^2+x_2^2-s$ for $s\in [-1,1]$ of $f$. The zero set of $H_s:=H(s,\cdot)$ forms a ring of radius $\sqrt{s}$ for $s>0$, which shrinks to a point for $s=0$ and disappears for $s<0$. It gives rise to an admissible homotopy on rings $\gamma_s:=H_s^{-1}(0)$ of degeneracy, since $m_{\gamma(s)}(JE^-)=0$ for all $s\in [-1,1]$ {and there are no limit cycles by assumption}. The deformation of flow is described by the following system
	\[(x_1^2+x_2^2-s)\left(\begin{array}{c} \dot x_1\\\dot x_2 \end{array}\right)=JE^-=\left(\begin{array}{c} -x_1\\-x_2 \end{array}\right),\q s\in[-1,1],\]
which has a sink for $s=-1$ and an annihilation ring enclosing a source for $s=1$. See Figure \ref{fig:ss_1}.
\begin{figure}[!htb]
	\centerline{
		\includegraphics[width=.45\textwidth]{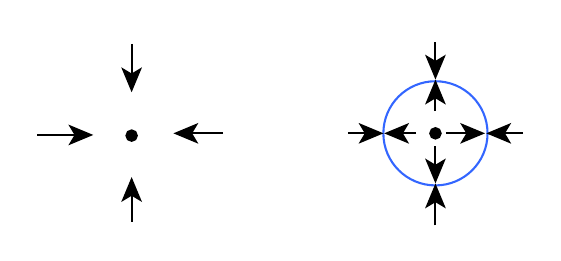}
	} 

	\caption{A sink (left) and an annihilation ring enclosing a source (right) are admissibly homotopic.}
	\label{fig:ss_1}
\end{figure}  

The other case is analogous and can be proved using $E^+$ from Example \ref{ex:f_2}(b). See Figure \ref{fig:ss_2} for the dynamics.
\begin{figure}[!htb]
\vskip-.1cm
\centerline{
\includegraphics[width=.45\textwidth]{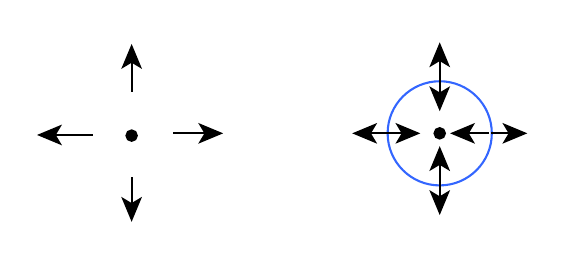}
} 

\caption{A source (left) and a creation ring enclosing a sink (right) are admissibly homotopic.}
	\label{fig:ss_2}
\end{figure}  
\end{proof}

In other words, a sink can be viewed as an annihilation ring collapsed to a point. Similarly, a source can be viewed as a creation ring collapsed to a point.

It follows from Lemma \ref{lem:ss_ah} that degeneracy indices for a source and for a sink differ by a degeneracy ring with an appropriate sign. For example, if we associate $(S^1)+a(\bz_1)$ to a source with an orientation coefficient $a\in \mathbb{R}$, then the index for a sink is equal to $(S^1)+(a-1)(\bz_1)$, since it is the index of a source subtracted by the index of a creation ring which is $(\bz_1)$. For symmetry consideration, we choose $a=\frac 12$ so that by reversing the flow orientation around sources and sinks, the indices are exchanged by taking an opposite sign in front of $\bz_1$. 

Define the {\it degeneracy index} of a source and a sink by the following.
\begin{align}
\Ind: \q \text{source}\q&\mapsto\q (S^1)+\frac 12(\bz_1)\label{eq:di_zero}\\
\text{sink}\q&\mapsto\q (S^1)-\frac 12(\bz_1).\notag
\end{align}
For an isolated zero that is neither a source nor a sink, the degeneracy index will have zero orientation coefficients and take the form of $d(S^1)$ for $d\in \bz$ being the {winding number}. For example, the degeneracy index of a saddle is $-(S^1)$.

\subsection{Generalized  Poincar\'e-Hopf Theorem} 

Based on the degeneracy index defined by (\ref{eq:di_ring})  and (\ref{eq:di_zero}), we have the following extension of Poincar\'e-Hopf theorem \cite{Milnor}. 

\begin{proposition}\label{prop:ind} 
For every flow of (\ref{eq:sym2d_gen_JE}) generated by $f$ and $JE$ on $S^2$, we have
\begin{equation}\rc{\label{eq:gt}}
\sum \Ind = 2(S^1),
\end{equation}
where {the sum} is taken over all isolated zeros and degeneracy rings of the flow.
\end{proposition}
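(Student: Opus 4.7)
\medskip

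\noindent\emph{Proof plan.} The plan is to reduce the flow inductively to a classical (non-degenerate) one via collapsing homotopies, and then invoke the classical Poincar\'e--Hopf theorem on $S^2$. By Proposition~\ref{prop:robust} each ring has ring index $\pm 1$, i.e.\ is either a creation ring or an annihilation ring. Lemma~\ref{lem:ss_ah} provides the central tool: a creation ring enclosing a sink is admissibly homotopic to a single source, and an annihilation ring enclosing a source is admissibly homotopic to a single sink. A direct arithmetic check shows that each such collapse preserves the degeneracy-index sum:
\[
\underbrace{(\bz_1)}_{\text{creation}} + \underbrace{\bigl((S^1)-(\bz_1)\bigr)}_{\text{sink}} = \underbrace{(S^1)}_{\text{source}}, \qquad \underbrace{-(\bz_1)}_{\text{annihilation}} + \underbrace{(S^1)}_{\text{source}} = \underbrace{(S^1)-(\bz_1)}_{\text{sink}}.
\]

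The inductive step proceeds by choosing an \emph{innermost} ring $\gamma$, i.e.\ one whose enclosed disk $D$ contains no other rings. On $D$ the dynamics $\dot x = f^{-1}JE$ is classical, and on $\partial D = \gamma$ the flow is uniformly transverse --- pointing into $D$ if $\gamma$ is a creation ring and out of $D$ if $\gamma$ is an annihilation ring. The boundary version of Poincar\'e--Hopf for a disk with transverse vector field then gives $\sum_{p\in D} \ind_p(\dot x) = \chi(D) = 1$; since each source or sink contributes $+1$, $D$ must contain exactly one isolated zero, and the direction of the flow forces its type (a sink for a creation $\gamma$, a source for an annihilation $\gamma$). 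Hence Lemma~\ref{lem:ss_ah} applies and $\gamma$ together with its enclosed zero can be collapsed.

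Iterating eliminates all rings and leaves a smooth vector field on $S^2$ whose only zeros are sources and sinks. Classical Poincar\'e--Hopf gives $s + k = \chi(S^2) = 2$, and since on a compact surface a flow with only source/sink zeros must possess at least one of each, $s = k = 1$. Summing the degeneracy indices in this reduced configuration yields $(S^1) + \bigl((S^1)-(\bz_1)\bigr) = 2(S^1) - (\bz_1)$, which by invariance of the sum under the collapsing homotopies coincides with $\sum \Ind$ for the original flow.

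\emph{Expected main obstacle.} The subtle part is the inductive step: guaranteeing that an innermost ring always encloses a single zero of the correct type, ruling out the mediated configurations flagged in Remark~\ref{rmk:ss_diff}. The disk Poincar\'e--Hopf argument above handles this cleanly once limit cycles inside $D$ are excluded, which may require a mild additional regularity assumption on the admitted class of flows; the rest of the argument is essentially bookkeeping.
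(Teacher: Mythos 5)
Your proof follows essentially the same route as the paper's: collapse the degeneracy rings one at a time, innermost first, using Lemma~\ref{lem:ss_ah} and the invariance of the degeneracy-index sum under admissible homotopies, then evaluate the sum on the resulting regular flow. Where you go beyond the paper is the inductive step: the paper simply \emph{asserts} that an innermost (or outermost) ring contains a source or a sink according to its sign, whereas you justify this with the boundary version of Poincar\'e--Hopf on the enclosed disk, using that a ring with $m_\gamma(JE)=0$ has the flow uniformly transverse to it. That is a genuine strengthening --- modulo the caveat you yourself flag: an attracting or repelling limit cycle inside the disk would let the index count $1$ be realized without a single source/sink of the expected type, so exactness of ``one zero of the forced type'' needs limit cycles excluded (e.g.\ a gradient-like hypothesis), which neither you nor the paper actually imposes.

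The one real weak point is your final step. The claim that a flow on a compact surface whose only zeros are sources and sinks must possess at least one of each is false: on $S^2$ one can have two sources separated by an attracting limit cycle, which satisfies $s+k=2$ with $k=0$ and would give $\sum \Ind = 2(S^1)$ rather than $2(S^1)-(\bz_1)$. So $s=k=1$ does not follow from classical Poincar\'e--Hopf alone; it requires the same exclusion of limit cycles as the disk argument. To be fair, the paper's own treatment of the regular case is no more careful --- it declares the ring-free flow ``homotopic to a flow that contains only one source and one sink'' without explaining why the degeneracy-index sum is invariant under such a (not obviously admissible) homotopy --- so your gap mirrors a gap already present in the original argument rather than introducing a new one.
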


\begin{proof} A {special} case that we consider first is when the flow does {\it not} possess any degeneracy rings, that is, a case of regular flow. Then, such flow is homotopic to a flow that contains only one source and one sink, for which we have
\[\sum \Ind = (S^1)+\frac 12 (\bz_1)+(S^1)-\frac 12 (\bz_1)=2(S^1).\]
\vs
	Otherwise, consider a flow that has degeneracy rings. Assume, without loss of generality, that they are robust and do not intersect each other on $S^2$. It follows that they form parallel rings on $S^2$ and by Proposition \ref{prop:robust}, they have ring index $\pm 1$, {corresponding} to degeneracy index $\pm (\bz_1)$.  {In case that there are limit cycles, they also form parallel rings with corresponding degeneracy index $\pm (\bz_1)$.}
\vs
 
Let $\gamma$ be a degeneracy ring that {encloses} no other degeneracy rings, that is the first or the last parallel {degeneracy} ring on $S^2$. {If $\gamma$ encloses no limit cycles, then} depending on its degeneracy index, it contains either a source (for $-(\bz_1)$-ring) or a sink (for $(\bz_1)$-ring). By Lemma \ref{lem:ss_ah}, one can then deform the ring to the contained fixed point by an admissible homotopy, which results in a sink or a source, in the respective two cases. Notice that throughout the deformation, the sum of the degeneracy index {does not change}. {Otherwise, if $\gamma$ encloses some limit cycles, one can repeat the above procedure for the limit cycles which can be deformed to a sink or source, depending on their stability types.}
\vs
Repeat this process for the {remaining} degeneracy rings {and limit cycles}, deforming them one after another to their contained fixed points by admissible homotopies, which every time makes a ring disappear by {switching between a source and a sink}. At the end, one obtains a flow without degeneracy rings, {whose sum of degeneracy indices equals} $2(S^1)$. Since the deformation procedure is achieved by {admissible} homotopies, the sum of degeneracy index {is preserved}. Thus, the sum of degeneracy index for the original flow is also equal to $2(S^1)$.
\end{proof}

\subsection{Interpretation of  $2(S^1)$}

The  sum $2(S^1)$ of degeneracy indices in Proposition \ref{prop:ind} describes the global topology of $S^2$ in the following way. Recall that the sum of ordinary indices  amounts to the Euler characteristic of $S^2$, which is given by an alternating sum using cells in a cellular decomposition of $S^2$. 

In a similar way, one can use degeneracy index to distinguish different cells for the construction of $S^2$. See Figure \ref{fig:gt2}, where indices for $2$D cells are defined by their enclosed singularities (cf. Lemma \ref{lem:ss_ah}). 
\begin{figure}[!htb] 
	\centerline{
		\includegraphics[width=.4\textwidth]{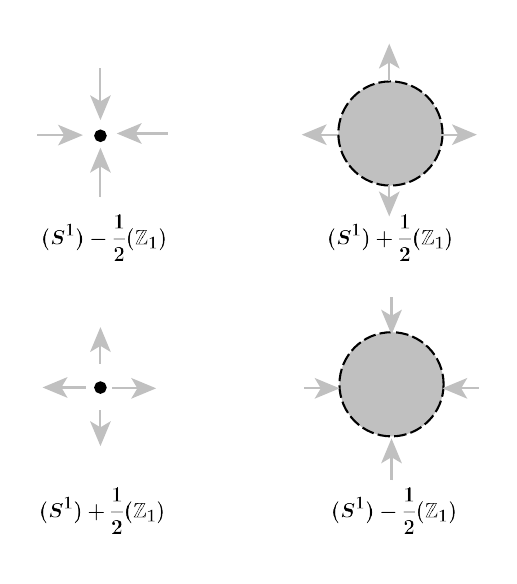} 
	} 
				\vskip.2cm
	\caption{Cells of dimension $0$ and $2$ with their indices. The upper resp. lower row represents a cellular decomposition of $S^2$ (cf. (\ref{eq:dc_sk})-(\ref{eq:dc_sc})).}
	\label{fig:gt2} 
\end{figure}  

For example, consider a (cellular) decomposition of $S^2$ given by
\begin{equation}\label{eq:dc_sk}
S^2=S^2\setminus\{\text{sink}\}\sqcup \{\text{sink}\},
\end{equation}
which is a combination of the two cells in the upper row of Figure \ref{fig:gt2}.  Alternatively, one can consider the decomposition of $S^2$ given by
\begin{equation}\label{eq:dc_sc}
S^2=S^2\setminus\{\text{source}\}\sqcup \{\text{source}\},
\end{equation}
which corresponds to the combination of the two  cells in the lower row of Figure \ref{fig:gt2}. 
Other decompositions involving $1$D cell are also possible. Consider the decomposition of $S^2$ by two $2$D cells and one $1$D (closed) cell
\begin{equation}\label{eq:dc_sn1}
S^2= B_N^+\sqcup B_S^+\sqcup S^{-},
\end{equation}
where $B_N^+$ resp. $B_S^+$ denotes the open northern resp. southern hemisphere of $S^2$ with {\it outgoing} flow on the boundary and $S^{-}$ denotes the equator of {\it incoming} flow from both sides (cf. Figure \ref{fig:gt3}, the upper row). 
\begin{figure}[!htb]
\centerline{
\includegraphics[width=.45\textwidth]{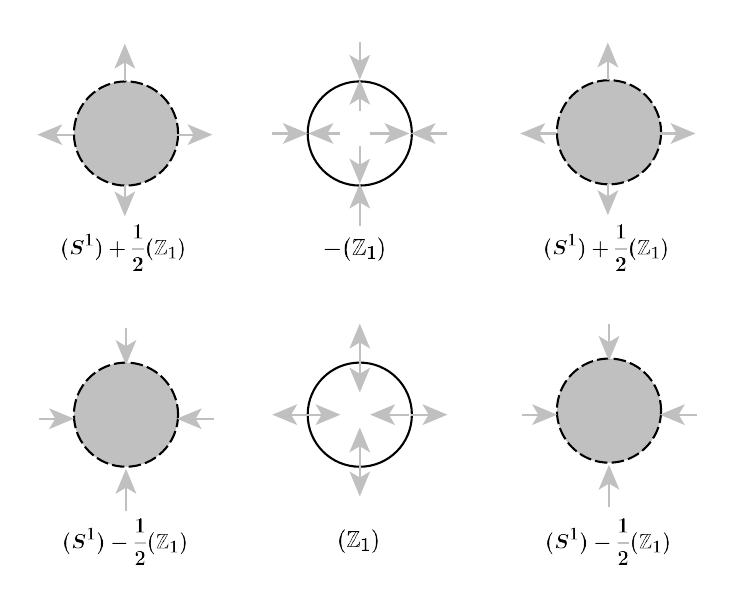} }
\vskip.2cm
\caption{Cells of dimension $1$ and $2$ with their indices.  The upper resp. lower row represents a cellular decomposition of $S^2$ (cf. (\ref{eq:dc_sn1})-(\ref{eq:dc_sn2})).}
\label{fig:gt3} 
\end{figure}   
Alternatively, 
\begin{equation}\label{eq:dc_sn2}
S^2= B_N^-\sqcup B_S^-\sqcup S^{+},
\end{equation}
is another decomposition, where $B_N^-$ resp. $B_S^-$ denotes the open northern resp. southern hemisphere of $S^2$ with {\it incoming} flow on the boundary and $S^{+}$ denotes the equator of {\it outgoing} flow from both sides (cf. Figure \ref{fig:gt3}, the lower row).  Notice that in all the decompositions (\ref{eq:dc_sk})-(\ref{eq:dc_sn2}), the sum of indices of cells is equal to $2(S^1)$.

\section{Conclusion}   
We have introduced a degeneracy index for $2$D flows and shown that it can be used to extend the Poincar\'e-Hopf Theorem for degenerate flows on $S^2$. But it remains unclear how it can be extended to other $2$D compact surfaces such as $g$-surfaces in a straighrforward way. The analysis of the robustness of degeneracy rings benefits from the simplicity of the topology of $S^2$ (cf. Subsection \ref{subsec:robust}). Also, the current discussion in $2$D is motivated by physics applications, but it is also interesting to explore higher dimensions. Careful readers may have noticed we have used the symbols $(S^1)$ and $(\bz_1)$ to distinguish zeros and degeneracy rings on $S^2$. It is not clear, however, how this choice can be made in a canonical way and how to systematically define a degeneracy index for degenerate flows in  higher dimensional phase spaces.\\

\noi {\bf Acknowledgments}
This work has been partially funded by Fondecyt Grant No.1180368. HR thanks JZ for his invitation to visit CECs, during which the work has found its fast convergence. HR would also like to express her gratitude to Dr. Ivan Ovsyannikov for insightful discussions.

\vs


\begin{thebibliography}{99} 
\bibitem{Dds} J. Saavedra, R. Troncoso and J. Zanelli, {\it Degenerate dynamical systems}, J. Math. Phys, {\bf 42} (2001) 4383. 

\bibitem{Choodnovsky} D.V. Choodnovsky and G.V. Choodnovsky, {\it Pole expansions of nonlinear partial differential equations}, Nuovo Cim., {\bf B 40} (1977) 339.

\bibitem{HTZ} M. Henneaux, C. Teitelboim and J. Zanelli, {\it Quantum mechanics for multivalued Hamiltonians}, Phys. Rev. {\bf A 36} (1987) 4417.

\bibitem{Wilczek} A. Shapere and F. Wilczek, {\it Classical time crystals}, Phys. Rev. Lett. {\bf 109}, (2012) 160402.

\bibitem{TZ} C. Teitelboim and J. Zanelli, {\it Dimensionally continued topological gravitation theory in Hamiltonian form}, Class.\ Quant.\ Grav.\  {\bf 4} (1987) L125.

\bibitem{Qds} F. de Micheli and J. Zanelli, {\it Quantum degenerate systems}, J. Math. Phys, {\bf 53} (2012) 102112. 

\bibitem{RCM_2000} R. Riaza, S.L. Campell and W. Marszalek, {\it On singular equilibria of index-$1$ DAEs}, Circuits Systems Signal Process, Vol. 19, No. 2, 2000, 131-157.
 
\bibitem{Rabier_1989} P.J. Rabier, {\it Implicit differential equations near a singular point}. J. Math. Anal. Appl., 144(2), 425-449, 1989.

\bibitem{Med_1991} M. Medved, {\it Normal forms of implicit and observed implicit differential equations}, {\it Riv. Mat. Pura Appl.}, (4) 95-107, 1991.

 
\bibitem{Rei_1997} G. Rei\ss ig, {\it Beitr\"age zu Theorie und Anwendungen impliziter Differentialgleichungen}, Dissertation, TU Dresden, 1997.

\bibitem{RB_1998} G. Rei\ss ig and H. Boche, {\it A normal form for implicit differential equations near singular points}, Proc. 1997 $9$th Int. Symp. System-Modelling-Control (SMC) Zakopane, Pl., 1998.

\bibitem{Bernardo_2008} M. di Bernardo, C. J. Budd, A. R. Champneys and P. Kowalczyk, {\it Piecewise-Smooth Dynamical Systems: Theory and Applications}. Springer-Verlag, London (2008).

\bibitem{Filippov_1988} A. F. Filippov, {\it Differential Equations with Discontinuous Righthand Sides}. Kluwer Academic Publisher, Dordrecht (1988).

\bibitem{Hogan_2016} S. J. Hogan, M. E. Homer, M. R. Jeffrey and R. Szalai,  {\it Piecewise smooth dynamical systems theory: the case of the missing boundary equilibrium bifurcations}, J. Nonlinear Sci., vol {\bf 26}, no 5 (2016) 1161-1173.

\bibitem{Kuehn_2015} C. Kuehn, {\it Multiple Time Scale Dynamics}, Applied Mathematical Sciences, Band 191, Springer, 2015. 

\bibitem{PS_2007}  G.A. Pavliotis and A.M. Stuart, {\it Multiscale Methods: Averaging and Homogenization} Texts in Applied Mathematics, Band 53, (2007).

\bibitem{Ruan} Z. Balanov, W. Krawcewicz and H. Ruan, {\it Applied equivariant degree, part I: an axiomatic approach to primary degree}, Discrete Contin. Dyn. Syst. Ser. {\bf A 15}, No. 3 (2006) 983-1016.

\bibitem{Milnor} J. W. Milnor, {\it Topology from the Differentiable Viewpoint}, The University Press of Virginia, 4th print, 1976. Princeton University Press; Revised edition (1997).


\end{thebibliography}
\end{document}